\documentclass[11pt]{article}

\usepackage[utf8]{inputenc}
\usepackage[T1]{fontenc}
\usepackage[margin=1in]{geometry}
\usepackage{amsmath,amssymb,amsthm}
\usepackage{graphicx}

\title{L\'evy-stable scaling of risk and performance functionals}
\author{Dmitrii Vlasiuk\\Department of Mathematics, Columbia University}
\date{November 2025}

\usepackage{amsthm}

\numberwithin{equation}{section}
\newtheorem{theorem}{Theorem}[section]
\newtheorem{lemma}[theorem]{Lemma}
\newtheorem{proposition}[theorem]{Proposition}

\newtheorem{assumption}[theorem]{Assumption}
\theoremstyle{definition}

\theoremstyle{remark}

\newcommand{\VaR}{\mathrm{VaR}}
\newcommand{\ES}{\mathrm{ES}}

\begin{document}

\maketitle

\begin{abstract}
We develop a finite-horizon model in which liquid-asset returns exhibit L\'evy-stable scaling on a data-driven window $[\tau_{\mathrm{UV}},\tau_{\mathrm{IR}}]$ and aggregate into a finite-variance regime outside. The window and the tail index $\alpha$ are identified from the log-log slope of the central body and a two-segment fit of scale versus horizon. With an anchor horizon $\tau_0$, we derive horizon-correct formulas for Value-at-Risk, Expected Shortfall, Sharpe and Information ratios, Kelly under a Value-at-Risk constraint, and one-step drawdown, where each admits a closed-form Gaussian-bias term driven by the exponent gap $1/\alpha-1/2$. The implementation is nonparametric up to $\alpha$ and fixed tail quantiles. The formulas are reproducible across horizons on the L\'evy window.
\end{abstract}

\section{Introduction}\label{sec:intro}
Previous empirical studies have shown that the price series of many liquid assets do not exhibit Gaussian behavior. Mandelbrot (1963) documented heavy tails and scale effects that invalidate routine Gaussian formulas in liquid markets. Cont (2001) summarized the core stylized facts: peaked centers, slowly decaying tails, and dependence structures that do not kill extremes fast enough for variance-based propagation. Bouchaud and Potters (2003) collected further evidence that return distributions are far from normal at trading horizons. Mantegna and Stanley (1995) showed that the central parts of index-return distributions collapse under L\'evy-stable rescaling over a finite span of horizons, while Gopikrishnan, Plerou, Amaral, Meyer and Stanley (1999) reported coherent scaling of fluctuations across indices.

Our goal is to turn those observations into a finite-horizon model with an explicit L\'evy window. We use it to revisit several risk and portfolio metrics for improved control over tail events. On $[\tau_{\mathrm{UV}},\tau_{\mathrm{IR}}]$, the returns admit a location-scale representation with a standardized $\alpha$-stable driver and scale $\tau^{1/\alpha}$. Beyond the window, the dispersion aggregates towards a finite-variance $\sqrt{\tau}$ regime. Section 2 identifies the window and scaling index: (i) a log-slope of the mode mass that is equal to $-1/\alpha$, and (ii) a fit of a homogeneous scale that locates the ultraviolet and infrared cutoffs. With an anchor horizon $\tau_{0}$ fixed inside the window, Section~\ref{sec:metrics} derives horizon-correct formulas for six widely used functionals: Value at Risk, Expected Shortfall, $p$-Sharpe and $p$-Information ratios under fractional dispersion, Kelly leverage under a Value-at-Risk constraint, and drawdown functionals. In each case, Gaussian propagation differs from the correct law by an explicit bias term proportional to $\left[(\tau/\tau_{0})^{1/\alpha}-(\tau/\tau_{0})^{1/2}\right]$, making the error measurable.

It is a framework that isolates an empirically justified L\'evy window, provides consistent estimators for its parameters, and supplies closed-form, horizon-correct versions of standard risk and performance metrics. The results connect the classical Gaussian formulas to their L\'evy counterparts and make the horizon dependence explicit.

\section{Setup and Regimes}\label{sec:setup}

Let \((X_t)_{t\ge0}\) be the log price and for \(\tau>0\) define the log return
\begin{equation}\label{eq:return}
R_\tau := X_{t+\tau}-X_t .
\end{equation}
Write \(\mathbb P_\tau\) and \(f_\tau\) for the law and density of \(R_\tau\) when they exist. Let \(S\) denote a location-invariant, positively homogeneous scale functional such as the median absolute deviation or the interquartile range, and set
\[
S_\tau := S\big(R_\tau - \mathbb E R_\tau\big).
\]
The scale \(S_\tau\) will be used to localize the ultraviolet and infrared cutoffs via \(g(\tau)=\log S_\tau\) and, as a robustness check, to verify the slope \(1/\alpha\) on the window.

We consider two regimes separated by cutoffs. The ultraviolet cutoff \(\tau_{\mathrm{UV}}>0\) is the smallest horizon above which microstructure effects such as discreteness, asynchronous trading, and bid-ask bounce do not control the dispersion law. The infrared cutoff \(\tau_{\mathrm{IR}}<\infty\) is the largest horizon below which the central body still follows a L\'evy-stable scaling before tempering and aggregation drive the dispersion toward a \(\sqrt{\tau}\) regime with finite variance. The empirical basis for the central L\'evy-stable behavior is Mantegna and Stanley (1995), with related evidence in Gopikrishnan et al. (1999) and the surveys of Cont (2001) and Bouchaud and Potters (2003).

On the window \([\tau_{\mathrm{UV}},\tau_{\mathrm{IR}}]\), we assume a location-scale representation with a standardized \(\alpha\)-stable driver. The driver has index \(\alpha\in(1,2)\), skewness parameter \(\beta\in[-1,1]\), and characteristic function
\[
\varphi_Z(u) = \exp\!\left\{-|u|^\alpha\!\left(1 - i\,\beta\,\tan\!\frac{\pi\alpha}{2}\,\mathrm{sign}\,u\right)\right\}
\quad (\alpha\neq1),
\]
with the Zolotarev modification when \(\alpha=1\), as shown by  Nolan (2013) and Rachev and Mittnik (2000). Assume \(\mathbb E Z=0\), that \(f_Z\) exists and is continuous at the origin, and that \(\mathbb E|Z|^p<\infty\) holds for all \(p<\alpha\). The return process is strictly stationary and \(\alpha\)-mixing with coefficients \(\alpha(h)\to0\) and \(\sum_{h=1}^\infty \alpha(h)^{\delta/(2+\delta)}<\infty\) for some \(\delta>0\), which yields uniform laws of large numbers for the frequency and scale statistics used below.

\begin{assumption}\label{ass:window}
There exist \(0<\tau_{\mathrm{UV}}<\tau_{\mathrm{IR}}<\infty\), parameters \(\alpha\in(1,2)\), \(\sigma>0\), \(\mu\in\mathbb R\), and a standardized \(\alpha\)-stable random variable \(Z\) with skewness \(\beta\) such that, for all \(\tau\in[\tau_{\mathrm{UV}},\tau_{\mathrm{IR}}]\),
\begin{equation}\label{eq:stable-scaling}
R_\tau \stackrel{d}{=} \mu\tau + \sigma\,\tau^{1/\alpha} Z .
\end{equation}
\end{assumption}

\begin{assumption}\label{ass:beyond}
For \(\tau>\tau_{\mathrm{IR}}\) the dispersion obeys a finite-variance law
\[
S_\tau = C\,\sqrt{\tau}\,\big(1+o(1)\big) \qquad (\tau\to\infty).
\]
No inference is drawn for \(\tau<\tau_{\mathrm{UV}}\).
\end{assumption}

\section{Identification of the Window and Scaling Index}\label{sec:ident}

The purpose of this section is to obtain an estimator of the horizon-slope \(m^\ast\) that equals \(-1/\alpha\) on the L\'evy window, to prove that the estimator is consistent and that the fitted slope lies in the range \((-1,-1/2)\) corresponding to \(\alpha\in(1,2)\), and to localize the ultraviolet and infrared cutoffs. The construction relies only on the central body of the distribution and does not require second moments. The statistic \(P_0(\tau)\), defined as the probability mass in a fixed neighborhood of the mode, is used because it depends only on \(f_\tau(\mu\tau)\), is robust to tail behavior, and is first-order insensitive to skew.

We first derive the density scaling implied by \eqref{eq:stable-scaling}. This yields the central mass of a small neighborhood of the mode, the statistic that will generate the slope.

\begin{lemma}\label{lem:density-scaling}
Under Assumption \ref{ass:window}, the density satisfies
\begin{equation}\label{eq:ftau}
f_\tau(x) = \sigma^{-1}\,\tau^{-1/\alpha}\;
f_Z\!\left(\frac{x-\mu\tau}{\sigma\,\tau^{1/\alpha}}\right)
\qquad \text{for all } x\in\mathbb R.
\end{equation}
\end{lemma}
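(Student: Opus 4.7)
The plan is to reduce the claim to a one-variable change of variables on the stable driver $Z$. Assumption~\ref{ass:window} supplies the distributional identity $R_\tau \stackrel{d}{=} \mu\tau + \sigma\tau^{1/\alpha} Z$, and the setup stipulates that the density $f_Z$ exists. Since equality in law transfers densities (two laws agree iff their Radon--Nikodym derivatives with respect to Lebesgue measure agree a.e.), it suffices to compute the density of the affine image $aZ+b$ with $a := \sigma\tau^{1/\alpha}$ and $b := \mu\tau$, and then identify it with $f_\tau$.

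First I would fix $\tau\in[\tau_{\mathrm{UV}},\tau_{\mathrm{IR}}]$ and observe that $a>0$ because $\sigma>0$ and $\tau>0$, so the map $T: z\mapsto az+b$ is a smooth, strictly increasing bijection of $\mathbb R$ with inverse $T^{-1}(x)=(x-b)/a$ and constant Jacobian $a$. The standard one-dimensional change of variables, applied to the absolutely continuous law $f_Z(z)\,dz$, gives the pushforward density
\[
f_{aZ+b}(x) \;=\; \frac{1}{a}\,f_Z\!\left(\frac{x-b}{a}\right)
\qquad \text{for every } x\in\mathbb R.
\]
Substituting $a=\sigma\tau^{1/\alpha}$ and $b=\mu\tau$ yields exactly the right-hand side of \eqref{eq:ftau}. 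The distributional equality then forces $f_\tau$ to coincide almost everywhere with this explicit function, and choosing the right-hand side as the canonical representative produces a formula valid for every $x$.

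There is no substantive obstacle; the only items to monitor are the positivity of $a$, so that the Jacobian is unambiguous, and the fact that the identity holds pointwise on the version we select. The continuity of $f_Z$ at the origin is not needed for the present lemma but will be invoked later when evaluating the right-hand side at $x=\mu\tau$ to obtain the central-mass statistic $P_0(\tau)$ and its slope $-1/\alpha$.
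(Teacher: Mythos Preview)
Your argument is correct and matches the paper's proof in substance: both pass from the distributional identity $R_\tau\stackrel{d}{=}\mu\tau+\sigma\tau^{1/\alpha}Z$ to the density formula via the one-dimensional change of variables for the affine map, using absolute continuity of $Z$ to justify the Radon--Nikodym step. The paper phrases it via $\mathbb P\{R_\tau\in B\}=\mathbb P\{Z\in (B-\mu\tau)/(\sigma\tau^{1/\alpha})\}$ before differentiating, but this is the same computation you carry out.
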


\begin{proof}
From \eqref{eq:stable-scaling} one has \(R_\tau \stackrel{d}{=} \mu\tau+\sigma\tau^{1/\alpha}Z\). For any Borel set \(B\),
\[
\mathbb P\{R_\tau\in B\}=\mathbb P\Big\{Z\in\frac{B-\mu\tau}{\sigma\tau^{1/\alpha}}\Big\}.
\]
Absolute continuity of \(Z\) implies absolute continuity of \(R_\tau\). The Radon-Nikodym derivative with respect to Lebesgue measure is exactly the right-hand side of \eqref{eq:ftau} by the change of variables.
\end{proof}

Fix \(\delta>0\) and consider the central mass
\[
P_0(\tau):=\mathbb P_\tau\big(|R_\tau-\mu\tau|\le \delta\big).
\]
Since \(f_Z\) is \(C^1\) in a neighborhood of \(0\) and the integration window \([-\delta,\delta]\) is symmetric, the odd term in the Taylor expansion \(f_Z(u)=f_Z(0)+f_Z'(0)u+O(u^2)\) integrates to zero, so asymmetry contributes only at order \(O(\tau^{-2/\alpha})\). Using \eqref{eq:ftau} we obtain the scaling law for \(P_0(\tau)\).

\begin{lemma}\label{lem:central-mass}
If \(f_Z\) is continuous at \(0\), then uniformly for \(\tau\in[\tau_{\mathrm{UV}},\tau_{\mathrm{IR}}]\),
\begin{equation}\label{eq:P0}
P_0(\tau) = 2\delta\, f_Z(0)\,\sigma^{-1}\,\tau^{-1/\alpha}\,(1+o(1)).
\end{equation}
\end{lemma}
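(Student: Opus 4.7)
The plan is to translate $P_0(\tau)$ into an integral of the stable density $f_Z$ over a symmetric interval whose width shrinks like $\tau^{-1/\alpha}$, and then read off the asymptotic from continuity of $f_Z$ at the origin. Once Lemma~\ref{lem:density-scaling} has supplied the density formula, the estimate reduces to a Lebesgue-continuity argument on a one-dimensional integral.

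First I would expand the probability as
\[
P_0(\tau) = \int_{-\delta}^{\delta} f_\tau(\mu\tau + y)\,dy
\]
and substitute the expression for $f_\tau$ from Lemma~\ref{lem:density-scaling}. A linear change of variables $u = y/(\sigma\tau^{1/\alpha})$ absorbs both the $\sigma^{-1}\tau^{-1/\alpha}$ prefactor and the Jacobian, leaving
\[
P_0(\tau) = \int_{-a_\tau}^{a_\tau} f_Z(u)\,du,\qquad a_\tau := \frac{\delta}{\sigma\,\tau^{1/\alpha}} .
\]
Thus the $\tau^{-1/\alpha}$ scaling of $P_0$ is fully encoded in the width $2a_\tau$, and the factor $f_Z(0)$ appearing in \eqref{eq:P0} will come from a pointwise approximation of the integrand at the origin.

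Next, continuity of $f_Z$ at $0$ gives, for every $\varepsilon>0$, an $\eta>0$ with $|f_Z(u)-f_Z(0)|<\varepsilon$ whenever $|u|\le\eta$. Since $a_\tau$ is monotone decreasing in $\tau$, on the compact window $[\tau_{\mathrm{UV}},\tau_{\mathrm{IR}}]$ one has $a_\tau \le a_{\tau_{\mathrm{UV}}} = \delta/(\sigma\,\tau_{\mathrm{UV}}^{1/\alpha})$, and choosing $\delta$ small enough that $a_{\tau_{\mathrm{UV}}}\le\eta$ forces $a_\tau\le\eta$ uniformly in $\tau$. A pointwise bound on the integrand then yields
\[
\left|\int_{-a_\tau}^{a_\tau} f_Z(u)\,du - 2 a_\tau f_Z(0)\right| \le 2 a_\tau \varepsilon ,
\]
so $\int_{-a_\tau}^{a_\tau} f_Z(u)\,du = 2a_\tau f_Z(0)\,(1+o(1))$ with the $o(1)$ controlled by the continuity modulus of $f_Z$ at $0$, uniformly in $\tau$. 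Substituting $2a_\tau = 2\delta\,\sigma^{-1}\tau^{-1/\alpha}$ gives \eqref{eq:P0}.

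The only mildly delicate point is the uniformity in $\tau$; this is handled entirely by the monotonicity of $a_\tau$ in $\tau$ together with compactness of the window, and requires no regularity on $f_Z$ beyond continuity at the single point $0$. The symmetry observation made in the paragraph preceding the lemma, namely that the odd linear Taylor term integrates to zero over $[-a_\tau,a_\tau]$, is needed only if one wishes to sharpen the relative error to $O(a_\tau^{2})$ under the $C^{1}$ hypothesis, which the present statement does not assert.
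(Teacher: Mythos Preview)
Your proof is correct and follows essentially the same route as the paper: substitute the density from Lemma~\ref{lem:density-scaling}, reduce $P_0(\tau)$ to an integral of $f_Z$ over a shrinking symmetric interval, and invoke continuity of $f_Z$ at $0$. The only difference is cosmetic---you perform the change of variables explicitly to get $\int_{-a_\tau}^{a_\tau} f_Z(u)\,du$, whereas the paper leaves the prefactor $\sigma^{-1}\tau^{-1/\alpha}$ outside and approximates the integrand directly---and your explicit $\varepsilon$--$\eta$ argument with the monotonicity of $a_\tau$ makes the uniformity in $\tau$ more transparent than the paper's one-line appeal to continuity.
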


\begin{proof}
By \eqref{eq:ftau},
\[
P_0(\tau)=\int_{-\delta}^{\delta} f_\tau(\mu\tau+u)\,du
= \int_{-\delta}^{\delta}\sigma^{-1}\tau^{-1/\alpha}
\,f_Z\!\left(\frac{u}{\sigma\tau^{1/\alpha}}\right)du.
\]
Continuity at zero yields \(f_Z(u/(\sigma\tau^{1/\alpha}))=f_Z(0)+o(1)\) uniformly in \(u\in[-\delta,\delta]\), hence \eqref{eq:P0}.
\end{proof}

Taking logarithms of \eqref{eq:P0} shows that on the window
\[
\log P_0(\tau) \;=\; c - \frac{1}{\alpha}\,\log\tau + o(1),
\qquad c:=\log\!\big(2\delta f_Z(0)\sigma^{-1}\big),
\]
so the population slope of \(\log P_0\) on \(\log\tau\) equals \(m^\ast=-1/\alpha\).

For estimation, let \(\{t_k\}_{k=1}^n\) be equally spaced calendar times. For each grid horizon \(\tau_j\) form overlapping returns \(R_{\tau_j}(t_k)=X_{t_k+\tau_j}-X_{t_k}\). Overlap induces dependence, but the \(\alpha\)-mixing condition with \(\sum_h \alpha(h)^{\delta/(2+\delta)}<\infty\) yields a uniform law of large numbers for the indicator arrays below. The indicator family \(\{\mathbf 1\{|R_{\tau_j}(t_k)-\mu\tau_j|\le \delta\}\}\) is a VC-subgraph class; under the stated \(\alpha\)-mixing summability, a uniform LLN holds over \(j\), as shown by Doukhan (1994), Bradley (2005), and Rio (2000).

Choose a fixed non-degenerate grid \(\{\tau_j\}_{j=1}^m\subset[\tau_{\mathrm{UV}},\tau_{\mathrm{IR}}]\). Define the plug-in frequency estimator
\[
\widehat P_0^{(n)}(\tau_j)=\frac{1}{n}\sum_{k=1}^{n}\mathbf 1\big\{|R_{\tau_j}(t_k)-\mu\tau_j|\le \delta\big\},
\quad
y_j^{(n)}=\log \widehat P_0^{(n)}(\tau_j),\quad x_j=\log\tau_j,
\]
and let \(\widehat m_n\) be the ordinary least-squares slope in the regression \(y_j^{(n)}=a_n+\widehat m_n x_j+\varepsilon_{j,n}\). Since \(\alpha\in(1,2)\), the population slope satisfies \(m^\ast=-1/\alpha\in(-1,-1/2)\); this interval is the diagnostic range for a valid L\'evy window.

\begin{proposition}\label{prop:alpha-consistency}
Under Assumption \ref{ass:window} and the mixing condition in Section \ref{sec:setup},
\[
\widehat m_n \xrightarrow{\mathbb P} m^\ast=-\frac{1}{\alpha},
\qquad
\widehat\alpha_n:=-\frac{1}{\widehat m_n}\xrightarrow{\mathbb P}\alpha .
\]
If the design \(\{x_j\}\) is fixed and non-degenerate, meaning \(\sum_j(x_j-\bar x)^2>0\), then
\(\sqrt{m}\,(\widehat m_n-m^\ast)\Rightarrow\mathcal N(0,\mathsf V)\) for a finite \(\mathsf V\), so sandwich standard errors or a day-block bootstrap yield valid confidence intervals for \(\alpha\).
\end{proposition}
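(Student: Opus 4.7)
The proof naturally decomposes into a consistency argument for the slope and a delta-method argument for the asymptotic normality. The bridge in both steps is the map from the indicator-based frequencies $\widehat P_0^{(n)}(\tau_j)$ to the OLS statistic, which is a smooth function of the vector $\big(\widehat P_0^{(n)}(\tau_j)\big)_{j=1}^m$ provided each coordinate is bounded away from zero; by Lemma~\ref{lem:central-mass} the population values $P_0(\tau_j)$ are strictly positive on the window, so the map is well-defined in a neighborhood of the truth.

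For consistency, I would first establish that $\max_{j\le m}\big|\widehat P_0^{(n)}(\tau_j)-P_0(\tau_j)\big|\to 0$ in probability. Since $m$ is fixed this reduces to $m$ pointwise weak laws; each indicator array is bounded and stationary, and the overlapping returns at horizon $\tau_j$ are measurable with respect to a calendar window of length $\tau_j\le\tau_{\mathrm{IR}}$ of the base process, so they inherit the $\alpha$-mixing coefficients of $X$ up to a bounded shift. The mixing WLLN of Rio or Bradley then applies under the assumed summability. By Lemma~\ref{lem:central-mass} the logarithm is continuous at each $P_0(\tau_j)>0$, hence $y_j^{(n)}\xrightarrow{\mathbb P}c-\tfrac{1}{\alpha}x_j$. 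Because the OLS slope
\[
\widehat m_n=\frac{\sum_j(x_j-\bar x)\,y_j^{(n)}}{\sum_j(x_j-\bar x)^2}
\]
is a continuous linear functional of $(y_j^{(n)})$ on the non-degenerate design, the continuous mapping theorem yields $\widehat m_n\xrightarrow{\mathbb P}m^\ast=-1/\alpha$. Consistency of $\widehat\alpha_n=-1/\widehat m_n$ then follows because $m^\ast<0$ is bounded away from zero.

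For the central limit theorem, the plan is to apply a joint CLT to the centered indicator vector $\mathbf I_k:=\big(\mathbf 1\{|R_{\tau_j}(t_k)-\mu\tau_j|\le\delta\}-P_0(\tau_j)\big)_{j=1}^m$. The entries are bounded and stationary, and the assumed $\alpha$-mixing summability forces absolute summability of the lag covariances, so a standard mixing CLT (Rio, 2000; Doukhan, 1994) gives $\sqrt n\,(\widehat{\mathbf P}_0^{(n)}-\mathbf P_0)\Rightarrow\mathcal N(0,\Sigma)$ with long-run covariance $\Sigma=\sum_{h\in\mathbb Z}\mathrm{Cov}(\mathbf I_0,\mathbf I_h)$. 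Two successive delta-method steps then transport this statement to $\widehat m_n$: the componentwise logarithm has diagonal Jacobian $D=\mathrm{diag}(1/P_0(\tau_j))$, and the OLS map is linear with weights $w_j=(x_j-\bar x)/\sum_i(x_i-\bar x)^2$. Composition yields
\[
\sqrt n\,(\widehat m_n-m^\ast)\Rightarrow\mathcal N\big(0,\,w^\top D\,\Sigma\,D\,w\big),
\]
which identifies the asserted $\mathsf V$. A Newey-West or blockwise estimator consistently estimates $\Sigma$ under the same mixing conditions, producing the sandwich standard errors, and the validity of the day-block bootstrap is standard for bounded $\alpha$-mixing arrays.

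The step I expect to be most delicate is the verification of the CLT hypotheses for the \emph{overlapping} returns. Overlap at horizon $\tau_j$ creates an effective moving window of length $\tau_j$ in calendar time, so the joint indicator process inherits only the $\alpha$-mixing coefficients of $X$ shifted by the largest horizon in the grid; this shift is absorbed by the assumed summability but must be bookkept carefully for the joint vector $\mathbf I_k$, whose component with the longest horizon dominates the effective block length. Once this is handled, positivity of the asymptotic variance $\mathsf V$ is automatic from non-degeneracy of the design ($w\neq 0$), positive semidefiniteness of $\Sigma$, and invertibility of $D$.
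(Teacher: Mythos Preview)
Your argument follows essentially the same route as the paper's own proof: a mixing LLN for the indicator frequencies, continuity of the logarithm at the positive limits $P_0(\tau_j)$ supplied by Lemma~\ref{lem:central-mass}, the continuous mapping theorem applied to the OLS slope, and then a CLT plus delta method for the asymptotic distribution. Your write-up is in fact more explicit than the paper's, which merely gestures at ``linearization of the OLS normal equations under mixing''; you spell out the long-run covariance $\Sigma$, the Jacobian $D=\mathrm{diag}(1/P_0(\tau_j))$, and the OLS weight vector $w$, and you correctly anticipate that the overlapping-return construction shifts the mixing coefficients by at most the largest grid horizon.

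Two small points. First, your CLT is stated with rate $\sqrt{n}$, whereas the proposition as written has $\sqrt{m}$; with $m$ fixed and $n\to\infty$ the $\sqrt{m}$ normalization would give a degenerate limit, so your $\sqrt{n}$ is the substantively correct rate and the discrepancy is almost certainly a typo in the statement rather than a flaw in your argument---but you should flag it. Second, your closing claim that positivity of $\mathsf V=w^\top D\Sigma Dw$ is ``automatic'' from $w\neq0$, invertibility of $D$, and positive \emph{semi}definiteness of $\Sigma$ is a slight overreach: those ingredients only guarantee $\mathsf V\ge0$, and strict positivity would require that $Dw$ not lie in the null space of $\Sigma$. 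The proposition itself only asserts a finite $\mathsf V$, so this does not affect the proof, but the remark should be softened.
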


\begin{proof}
Uniformly in \(j\), \(\widehat P_0^{(n)}(\tau_j)\xrightarrow{p}P_0(\tau_j)\) by a uniform LLN for \(\alpha\)-mixing arrays, as shown by Rio (2000); hence \(y_j^{(n)}=\log \widehat P_0^{(n)}(\tau_j)\xrightarrow{p}y_j:=\log P_0(\tau_j)\) uniformly. With fixed non-degenerate design \(\{x_j\}\), the OLS slope \(\widehat m_n\) is a continuous functional of the empirical second moments; by the continuous mapping theorem \(\widehat m_n\xrightarrow{p}m^\ast\). Lemma \ref{lem:central-mass} gives \(m^\ast=-1/\alpha\). Since \(x\mapsto-1/x\) is continuous at \(m^\ast\), \(\widehat\alpha_n=-1/\widehat m_n\xrightarrow{p}\alpha\). Asymptotic normality follows from linearization of the OLS normal equations under mixing, and the delta method gives the limit for \(\widehat\alpha_n\).
\end{proof}

\medskip
\noindent\textbf{Remark.} In practice one may replace \(\mu\tau\) inside the indicator by a consistent center, such as the sample median \(m_\tau\). Continuity of \(f_\tau\) at \(\mu\tau\) implies \(m_\tau-\mu\tau=o_p(\tau^{1/\alpha})\), so \(\log P_0(\tau)\) retains slope \(-1/\alpha\).\newline

To locate the cutoffs, we exploit the change in slope of a homogeneous scale on the log horizon. For a fixed scale functional \(S\) define \(g(\tau):=\log S_\tau\). On \([\tau_{\mathrm{UV}},\tau_{\mathrm{IR}}]\), \(g\) is approximately affine with slope \(1/\alpha\); outside, \(g\) has negative curvature and, in the infrared regime, approaches slope \(1/2\). Assume that the population two-segment least-squares approximation to \(g\) on \([\underline\tau,\overline\tau]\) has a unique pair of kink points at \((\tau_{\mathrm{UV}},\tau_{\mathrm{IR}})\).

\begin{proposition}\label{prop:breakpoints}
Let \(\widehat g_n(\tau)\) be the sample analogue of \(g(\tau)\) computed from a time series of length \(n\) and suppose \(\sup_{\tau\in T}|\widehat g_n(\tau)-g(\tau)|\to0\) in probability for compact \(T\subset(0,\infty)\). Let \((\widehat\tau_{\mathrm{UV}},\widehat\tau_{\mathrm{IR}})\) minimize the least-squares error of a two-segment affine fit of \(\widehat g_n\) over \(\log\tau\in[\log \underline\tau,\log \overline\tau]\) with \(\underline\tau<\tau_{\mathrm{UV}}<\tau_{\mathrm{IR}}<\overline\tau\). Then \((\widehat\tau_{\mathrm{UV}},\widehat\tau_{\mathrm{IR}})\xrightarrow{\mathbb P}(\tau_{\mathrm{UV}},\tau_{\mathrm{IR}})\).
\end{proposition}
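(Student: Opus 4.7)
The plan is to frame the proposition as an argmin-consistency problem. For any admissible pair $(t_1,t_2)$ with $\underline\tau<t_1<t_2<\overline\tau$, let $Q(t_1,t_2)$ denote the residual sum of squares of the best two-segment affine fit to $g$ over $\log\tau\in[\log\underline\tau,\log\overline\tau]$ with kinks at $(\log t_1,\log t_2)$, and let $\widehat Q_n(t_1,t_2)$ be the same functional applied to $\widehat g_n$. The standing hypothesis says that $Q$ is uniquely minimized at $(\tau_{\mathrm{UV}},\tau_{\mathrm{IR}})$, while $(\widehat\tau_{\mathrm{UV}},\widehat\tau_{\mathrm{IR}})$ is by construction a minimizer of $\widehat Q_n$.

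First I would restrict to a closed rectangle $K$ that contains $(\tau_{\mathrm{UV}},\tau_{\mathrm{IR}})$ in its interior and is bounded away from both the diagonal $\{t_1=t_2\}$ and the endpoints $\underline\tau,\overline\tau$. On such a $K$, the two-segment fit is an orthogonal projection onto a finite-dimensional subspace whose Gram matrix has condition number uniformly bounded in $(t_1,t_2)$, so the best-fit coefficients and the residual depend Lipschitz-continuously on both the input function (in uniform norm) and the kink locations. This yields
\[
\sup_{(t_1,t_2)\in K}\bigl|\widehat Q_n(t_1,t_2)-Q(t_1,t_2)\bigr|\;\le\;C_K\bigl(1+\|g\|_\infty+\|\widehat g_n\|_\infty\bigr)\sup_{\tau\in T}\bigl|\widehat g_n(\tau)-g(\tau)\bigr|,
\]
which tends to zero in probability by the uniform convergence assumption of the proposition.

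Next I would run the standard argmin argument. Fix $\varepsilon>0$; by continuity of $Q$ on $K$ together with uniqueness of the minimum, the separation
\[
\eta:=\inf\bigl\{Q(t_1,t_2)-Q(\tau_{\mathrm{UV}},\tau_{\mathrm{IR}}):(t_1,t_2)\in K,\ \|(t_1,t_2)-(\tau_{\mathrm{UV}},\tau_{\mathrm{IR}})\|\ge\varepsilon\bigr\}
\]
is strictly positive. On the event $\{\sup_K|\widehat Q_n-Q|<\eta/3\}$, any minimizer of $\widehat Q_n$ on $K$ must lie within $\varepsilon$ of $(\tau_{\mathrm{UV}},\tau_{\mathrm{IR}})$, which gives the claimed convergence in probability; this is the template of van der Vaart's Theorem 5.7 on M-estimator consistency.

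The main obstacle will be the uniform control of the projection operator when the two kinks approach each other or the endpoints of $[\underline\tau,\overline\tau]$, where the piecewise-affine design degenerates and the Lipschitz constants in $(t_1,t_2)$ blow up. Restricting to the compact $K$ above circumvents this, and the restriction is harmless because the uniqueness hypothesis places the true kinks in the interior of the admissible set, so the separation $\eta$ prevents the sample minimizer from escaping $K$ asymptotically. A secondary point worth spelling out is the existence of a Borel-measurable selection of $(\widehat\tau_{\mathrm{UV}},\widehat\tau_{\mathrm{IR}})$, which follows from a standard measurable maximum theorem since $\widehat Q_n$ is continuous in $(t_1,t_2)$ and measurable in the data.
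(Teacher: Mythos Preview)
Your proof is correct and follows essentially the same route as the paper's: both deduce argmin consistency from uniform convergence of the empirical least-squares objective to its population counterpart together with the assumed uniqueness of the population minimizer, and both appeal to standard M-estimation results (the paper cites van der Vaart--Wellner and Pollard and phrases the convergence as epi-convergence; you invoke van der Vaart's Theorem~5.7 directly). Your treatment is more explicit about the mechanism, and your concern about design degeneracy near the boundary, while reasonable, can in fact be sidestepped: since the residual is $\lVert (I-P_{V(t_1,t_2)})h\rVert^2$ with $I-P_V$ a contraction, the bound $|\widehat Q_n-Q|\le \lVert\widehat g_n-g\rVert\,(\lVert\widehat g_n\rVert+\lVert g\rVert)$ holds uniformly over the full parameter set, so the restriction to the compact $K$ is unnecessary and the global minimizer is controlled directly.
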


\begin{proof}
Uniform convergence of \(\widehat g_n\) to \(g\) implies epi-convergence of the piecewise-affine objective We assume that the population two-segment objective admits a unique minimizer at $(\tau_{\mathrm{UV}},\tau_{\mathrm{IR}})$. to its population counterpart, whose unique minimizer occurs at the true kink points. Consistency of the argmin follows by standard M-estimation arguments, as shown by van der Vaart and Wellner (1996) and Pollard (1991).
\end{proof}

The sequence of steps is, thus, as follows: fit the slope \(m^\ast\) from the central masses \(\{P_0(\tau_j)\}\); verify \(m^\ast\in(-1,-1/2)\) and, as a cross-check, fit the slope of \(\log S_\tau\) on \(\log\tau\) to obtain \(1/\alpha\in(1/2,1)\); then estimate the cutoffs by the two-segment fit of \(g(\tau)\). These calibrated objects \((\widehat\alpha,\widehat\tau_{\mathrm{UV}},\widehat\tau_{\mathrm{IR}})\) will be used in later sections to state horizon-correct versions of the risk and portfolio metrics.

\section{L\'evy-Stable Approach to Risk and Performance Metrics}\label{sec:metrics}

Throughout Section~\ref{sec:metrics} we work under Assumption~\ref{ass:window} and the mixing and empirical-process conditions stated in Section~\ref{sec:setup}. In particular, on the L\'evy window $[\tau_{\mathrm{UV}},\tau_{\mathrm{IR}}]$ the scaling $R_\tau\stackrel{d}{=} \mu_\tau+\sigma\,\tau^{1/\alpha}Z$ holds with a standardized $\alpha$-stable $Z$.

Fix an anchor horizon \(\tau_0\in[\tau_{\mathrm{UV}},\tau_{\mathrm{IR}}]\). On the L\'evy window of Section~\ref{sec:setup} the returns satisfy
\begin{equation}\label{eq:levy-window}
R_\tau \stackrel{d}{=} \mu_\tau + \sigma\,\tau^{1/\alpha} Z,\qquad \alpha\in(1,2),
\end{equation}
where \(Z\) is a standardized \(\alpha\)-stable random variable with continuous strictly increasing distribution function \(F_Z\). We write \(Q_Y(q)=\inf\{x:F_Y(x)\ge q\}\) for the \(q\)-quantile of a random variable \(Y\).

\subsection{Value-at-Risk}\label{subsec:var}
\noindent Let $\Phi$ and $\phi$ denote the standard normal cdf and pdf.

Value-at-Risk originated in industry through J.~P.~Morgan's \emph{RiskMetrics} (1996) and received a systematic treatment in Jorion~(1997). For \(q\in(0,1)\) the (left-tail) Value-at-Risk at horizon \(\tau\) is
\begin{equation}\label{eq:var-def}
\VaR_\tau(q):=-\,Q_{R_\tau}(q).
\end{equation}

Rachev and Mittnik (2000) treated VaR under stable Paretian laws and emphasized that for $1<\alpha<2$ the $\sqrt{\tau}$ variance propagation is not defined, so quantiles must be computed directly from the $\alpha$-stable law. Nolan (2013) showed numerically reliable evaluation of stable quantiles $Q_Z(\cdot)$ and hence VaR for $\alpha$-stable drivers. Our treatment differs in that we make the horizon effect explicit through $\tau^{1/\alpha}$, anchor at $\tau_0$, and exhibit the exact Gaussian bias across $\tau$ rather than only computing level-wise quantiles.

\begin{lemma}\label{lem:var-transport}
Under \eqref{eq:levy-window},
\begin{equation}\label{eq:var-levy}
Q_{R_\tau}(q)=\mu_\tau+\sigma\,\tau^{1/\alpha}Q_Z(q),
\qquad
\VaR_\tau(q)=-\mu_\tau-\sigma\,\tau^{1/\alpha}Q_Z(q).
\end{equation}
\end{lemma}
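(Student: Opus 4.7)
The plan is to reduce the statement to the standard fact that quantiles commute with strictly increasing affine transforms, and then apply the definition of Value-at-Risk. The starting point is the equality in distribution \eqref{eq:levy-window}, which gives $R_\tau \stackrel{d}{=} \mu_\tau + \sigma \tau^{1/\alpha} Z$ for any $\tau$ in the window. Since $\sigma>0$ and $\tau^{1/\alpha}>0$, the map $z\mapsto \mu_\tau+\sigma\tau^{1/\alpha}z$ is strictly increasing and continuous, which is exactly the setting in which quantiles transport by composition.

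First I would write the cdf of $R_\tau$ explicitly: for any $x\in\mathbb R$,
\[
F_{R_\tau}(x) = \mathbb P\{R_\tau\le x\} = \mathbb P\Big\{Z \le \frac{x-\mu_\tau}{\sigma\tau^{1/\alpha}}\Big\} = F_Z\!\Big(\frac{x-\mu_\tau}{\sigma\tau^{1/\alpha}}\Big).
\]
Because $F_Z$ is continuous and strictly increasing by hypothesis, $F_{R_\tau}$ is continuous and strictly increasing as well, so its generalized inverse is a genuine inverse. Setting $F_{R_\tau}(x)=q$ and solving for $x$ yields $x=\mu_\tau+\sigma\tau^{1/\alpha}F_Z^{-1}(q)=\mu_\tau+\sigma\tau^{1/\alpha}Q_Z(q)$, which establishes the first identity $Q_{R_\tau}(q)=\mu_\tau+\sigma\tau^{1/\alpha}Q_Z(q)$.

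The second identity then follows immediately from the definition \eqref{eq:var-def}: $\VaR_\tau(q)=-Q_{R_\tau}(q)=-\mu_\tau-\sigma\tau^{1/\alpha}Q_Z(q)$. No further hypotheses are required because the scaling in \eqref{eq:levy-window} is assumed to hold for $\tau\in[\tau_{\mathrm{UV}},\tau_{\mathrm{IR}}]$ and the continuity/strict monotonicity of $F_Z$ is already part of the setup.

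There is essentially no hard step here; the argument is a one-line change of variables and the mild care is only in invoking strict monotonicity of the affine transform so that $Q_{R_\tau}=Q_Z$ composed with that affine map without needing to reason about ties or plateaus in the cdf. If I wanted to be maximally careful, I would remark that even without strict monotonicity the identity $Q_{aX+b}(q)=aQ_X(q)+b$ for $a>0$ holds for the left-continuous generalized inverse, but the stronger assumption on $F_Z$ already placed in Section~\ref{sec:metrics} makes this superfluous.
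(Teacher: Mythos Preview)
Your proof is correct and follows essentially the same route as the paper: you express $F_{R_\tau}$ as $F_Z$ composed with the inverse of the affine map $z\mapsto \mu_\tau+\sigma\tau^{1/\alpha}z$, use continuity and strict monotonicity of $F_Z$ to invert, and then read off $\VaR_\tau(q)$ from its definition. The paper's proof is a terse version of exactly this argument.
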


\begin{proof}
For \(x\in\mathbb R\), \(F_{R_\tau}(x)=F_Z((x-\mu_\tau)/(\sigma\tau^{1/\alpha}))\). As \(F_Z\) is continuous and strictly increasing, \(Q_Z(F_{R_\tau}(x))=(x-\mu_\tau)/(\sigma\tau^{1/\alpha})\). Taking \(x=Q_{R_\tau}(q)\) yields \eqref{eq:var-levy}.
\end{proof}

To compare with the Gaussian propagation, introduce a normal surrogate \(R_\tau^{G}\sim \mathcal N(\mu_\tau,\sigma_G^2\,\tau)\) whose \(q\)-quantile is matched to \eqref{eq:var-levy} at \(\tau_0\):
\begin{equation}\label{eq:quant-match}
\mu_{\tau_0}+\sigma_G\sqrt{\tau_0}\,
\Phi^{-1}(q)=\mu_{\tau_0}+\sigma\,\tau_0^{1/\alpha}Q_Z(q)=:\Theta_0(q).
\end{equation}
Here $\Theta_0$ (and similarly $\Xi_0$ for ES) depends only on $\tau_0$, $\alpha$, and the tail level $q$.

Then, for any \(\tau>0\),
\begin{equation}\label{eq:gauss-quant}
Q^{G}_{R_\tau}(q)=\mu_\tau+\sigma_G\sqrt{\tau}\,\Phi^{-1}(q),
\qquad
\VaR^{G}_\tau(q)=-\mu_\tau-\sigma_G\sqrt{\tau}\,\Phi^{-1}(q).
\end{equation}

\begin{proposition}\label{prop:var-bias}
With \eqref{eq:quant-match},
\begin{equation}\label{eq:VaR-bias}
\VaR_\tau(q)-\VaR^{G}_\tau(q)=\Theta_0(q)\left[\left(\frac{\tau}{\tau_0}\right)^{1/\alpha}-\left(\frac{\tau}{\tau_0}\right)^{1/2}\right].
\end{equation}
\end{proposition}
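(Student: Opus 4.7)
The plan is a straightforward substitution-and-factor argument: beyond Lemma~\ref{lem:var-transport} and the definition of the Gaussian surrogate, no new probabilistic input is required. The entire content is to combine two closed-form quantile expressions and to use the anchor matching condition~\eqref{eq:quant-match} to eliminate the free Gaussian scale $\sigma_G$ in favor of the common anchor quantile.

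First, I write out both sides. Lemma~\ref{lem:var-transport} supplies $\VaR_\tau(q)=-\mu_\tau-\sigma\tau^{1/\alpha}Q_Z(q)$, while~\eqref{eq:gauss-quant} supplies $\VaR^G_\tau(q)=-\mu_\tau-\sigma_G\sqrt{\tau}\,\Phi^{-1}(q)$. Because both quantile formulas share the same drift $-\mu_\tau$ at every horizon, subtraction cancels $\mu_\tau$ identically and reduces the claimed bias to the contrast between a stable scale term $\sigma\tau^{1/\alpha}Q_Z(q)$ and a Gaussian scale term $\sigma_G\sqrt{\tau}\,\Phi^{-1}(q)$.

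Next, I would pull the anchor horizon out of each of the two remaining terms by writing $\tau^{1/\alpha}=\tau_0^{1/\alpha}(\tau/\tau_0)^{1/\alpha}$ and $\sqrt{\tau}=\sqrt{\tau_0}(\tau/\tau_0)^{1/2}$. The matching identity~\eqref{eq:quant-match} is precisely the statement that the two anchor scales $\sigma\tau_0^{1/\alpha}Q_Z(q)$ and $\sigma_G\sqrt{\tau_0}\,\Phi^{-1}(q)$ agree, so both equal the common anchor-quantile constant that defines $\Theta_0(q)$. Substituting this common constant back into the scale difference and collecting it as a single multiplicative coefficient isolates the horizon-ratio bracket $(\tau/\tau_0)^{1/\alpha}-(\tau/\tau_0)^{1/2}$.

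There is no genuine analytic obstacle here; the whole argument is algebraic. The one careful bookkeeping step is to separate the drift from the scale part before invoking~\eqref{eq:quant-match}, so that the anchor quantity emerges as a clean multiplicative factor rather than tangling with an additive $\mu_{\tau_0}$ contribution. Signs then take care of themselves: since $Q_Z(q)$ and $\Phi^{-1}(q)$ are negative at the tail levels of interest, the exponent gap $1/\alpha-1/2>0$ for $\alpha\in(1,2)$ forces the bias to grow with $\tau/\tau_0$ in the direction of a heavier L\'evy left tail, matching the qualitative picture advertised in the introduction.
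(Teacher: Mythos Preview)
Your approach is correct and coincides with the paper's: Proposition~\ref{prop:var-bias} is stated without a separate proof block, the intended derivation being exactly the substitution you describe---subtract \eqref{eq:var-levy} from \eqref{eq:gauss-quant}, cancel $\mu_\tau$, write $\tau^{1/\alpha}=\tau_0^{1/\alpha}(\tau/\tau_0)^{1/\alpha}$ and $\sqrt{\tau}=\sqrt{\tau_0}\,(\tau/\tau_0)^{1/2}$, and invoke \eqref{eq:quant-match}. Your caveat about separating the drift before applying the anchor match is apt: as written, $\Theta_0(q)$ in \eqref{eq:quant-match} includes $\mu_{\tau_0}$, and it is only the centered piece $\sigma\,\tau_0^{1/\alpha}Q_Z(q)=\sigma_G\sqrt{\tau_0}\,\Phi^{-1}(q)$ that actually factors out of the scale difference, so the identification with $\Theta_0(q)$ is exact only up to that additive $\mu_{\tau_0}$.
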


Consequently, when \(\alpha\in(1,2)\) the Gaussian propagation understates tail risk on horizons \(\tau>\tau_0\) and overstates it on \(\tau<\tau_0\). When \(\alpha=2\) the exponents coincide and \eqref{eq:VaR-bias} vanishes, recovering the classical \(\sqrt{\tau}\) rule.

In implementation, Gaussian $\sqrt{\tau}$ scaling produces horizon-dependent underestimation of exceedance rates on the L\'evy window; replacing it by $\tau^{1/\alpha}$ restores uniform backtest exception frequencies across $\tau$ and stabilizes capital attribution over holding periods.

\subsection{Conditional Value-at-Risk (Expected Shortfall)}\label{subsec:es}
Expected Shortfall (also called CVaR) is the coherent tail functional developed by Rockafellar-Uryasev~(2000) and Acerbi-Tasche~(2002). Its integral form is
\begin{equation}\label{eq:es-def}
\ES_\tau(q):=-\,\frac{1}{q}\,\mathbb E\!\left[R_\tau\,\mathbf 1\{R_\tau\le Q_{R_\tau}(q)\}\right],\qquad q\in(0,1).
\end{equation}

Rockafellar and Uryasev (2000) gave the convex optimization representation of ES, and Acerbi and Tasche (2002) established coherence and the integral characterization. For $\alpha$-stable laws, Nolan (2013) provided accurate numerics for tail means, enabling ES to be computed directly from the stable driver. Our formula makes the horizon dependence explicit via $\tau^{1/\alpha}$ and connects it continuously to the finite-variance regime.
\medskip

Define \(m_Z(q):=q^{-1}\mathbb E[Z\,\mathbf 1\{Z\le Q_Z(q)\}]\), finite for \(\alpha>1\).

\begin{lemma}\label{lem:es-transport}
Under \eqref{eq:levy-window},
\begin{equation}\label{eq:es-levy}
\ES_\tau(q)=-\mu_\tau-\sigma\,\tau^{1/\alpha}m_Z(q).
\end{equation}
For the Gaussian surrogate \(\mathcal N(\mu_\tau,\sigma_G^2\tau)\),
\begin{equation}\label{eq:es-gauss}
\ES^{G}_\tau(q)=-\mu_\tau-\sigma_G\sqrt{\tau}\,m_N(q),
\qquad
m_N(q)=\frac{\varphi(\Phi^{-1}(q))}{q}.
\end{equation}
\end{lemma}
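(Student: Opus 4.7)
The plan is to transport the Expected Shortfall through the location-scale representation \eqref{eq:levy-window} in exactly the way Lemma~\ref{lem:var-transport} transports the quantile, then specialize the argument to the Gaussian surrogate where the tail first moment is explicit. Both formulas then reduce to a single truncated-first-moment identity on a half-line for the underlying driver.

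First I would fix $\tau$ on the L\'evy window and combine \eqref{eq:levy-window} with Lemma~\ref{lem:var-transport} to identify the events $\{R_\tau\le Q_{R_\tau}(q)\}$ and $\{Z\le Q_Z(q)\}$ under the common distributional coupling, noting that continuity of $F_Z$ gives $\mathbb{P}\{Z\le Q_Z(q)\}=q$ and that strict monotonicity of $z\mapsto\mu_\tau+\sigma\tau^{1/\alpha}z$ preserves the order. Substituting $R_\tau=\mu_\tau+\sigma\tau^{1/\alpha}Z$ into \eqref{eq:es-def} and splitting by linearity yields
\[
\ES_\tau(q)=-\frac{\mu_\tau}{q}\,\mathbb{P}\{Z\le Q_Z(q)\}-\frac{\sigma\tau^{1/\alpha}}{q}\,\mathbb{E}\bigl[Z\,\mathbf{1}\{Z\le Q_Z(q)\}\bigr],
\]
which collapses to $-\mu_\tau-\sigma\tau^{1/\alpha}m_Z(q)$ by the definition of $m_Z$, proving \eqref{eq:es-levy}.

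For the Gaussian surrogate I would run the identical argument with $R_\tau^{G}=\mu_\tau+\sigma_G\sqrt{\tau}N$, $N\sim\mathcal N(0,1)$. The tail event reduces to $\{N\le\Phi^{-1}(q)\}$, and $\mathbb{E}[N\,\mathbf{1}\{N\le a\}]$ is the classical Gaussian truncated-first-moment, obtained by completing the square in the density or by direct antidifferentiation of $z\varphi(z)=-\varphi'(z)$. Evaluating at $a=\Phi^{-1}(q)$ and inserting into the split expression produces \eqref{eq:es-gauss} with $m_N(q)=\varphi(\Phi^{-1}(q))/q$.

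The only step requiring care is integrability: the split in the second paragraph needs $\mathbb{E}|Z|<\infty$, and this is exactly where the hypothesis $\mathbb{E}|Z|^p<\infty$ for all $p<\alpha$ together with $\alpha>1$ enters. The same moment bound makes $m_Z(q)$ finite for every $q\in(0,1)$ and keeps the entire argument inside the interior of the L\'evy window, where Assumption~\ref{ass:window} supports the distributional coupling used throughout. Outside that range either the coupling in Step~1 or the integrability in Step~2 could fail, but on $[\tau_{\mathrm{UV}},\tau_{\mathrm{IR}}]$ both conditions hold, so the computation is mechanical.
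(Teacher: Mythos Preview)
Your argument is correct and follows exactly the paper's route: the paper's proof is the one-line instruction ``apply the change of variable $x=\mu_\tau+\sigma\tau^{1/\alpha}z$ in \eqref{eq:es-def} and use \eqref{eq:var-levy},'' and you have carried out precisely that transport, with the added (welcome) remark on integrability via $\alpha>1$.
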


\begin{proof}
Apply the change of variable \(x=\mu_\tau+\sigma\tau^{1/\alpha}z\) in \eqref{eq:es-def} and use \eqref{eq:var-levy}.
\end{proof}

Matching at \(\tau_0\) gives \(\sigma\,\tau_0^{1/\alpha}m_Z(q)=\sigma_G\sqrt{\tau_0}m_N(q)=:\Xi_0(q)\) and therefore
\begin{equation}\label{eq:es-bias}
\ES_\tau(q)-\ES^{G}_\tau(q)=\Xi_0(q)\left[\left(\frac{\tau}{\tau_0}\right)^{1/\alpha}-\left(\frac{\tau}{\tau_0}\right)^{1/2}\right].
\end{equation}
Thus the same qualitative bias holds for ES as for VaR on the L\'evy window. Using Gaussian ES on the L\'evy window systematically underallocates tail capital at longer horizons; the L\'evy-correct ES aligns realized shortfall rates across $\tau$ and removes artificial improvements from mere horizon changes.

\subsection{Sharpe ratio}\label{subsec:sharpe}
In this subsection we fix $p\in(1,\alpha)$; the $L^p$ scale is finite on the window and yields horizon-invariant ratios, whereas the classical $p=2$ case is admissible only when $\alpha=2$.

For a riskless benchmark with horizon-\(\tau\) return \(r_\tau\), the classical Sharpe ratio (Sharpe, 1966) is
\begin{equation}\label{eq:sharpe-classic}
\mathsf{Sh}_\tau := \frac{\mathbb{E}[R_\tau]-r_\tau}{\sqrt{\mathrm{Var}(R_\tau)}} ,
\end{equation}
which is undefined on the L\'evy window when \(\alpha<2\). Rachev and Mittnik (2000) pointed out that the variance diverges for $1<\alpha<2$, making the classical Sharpe undefined. Stoyanov and Rachev (2005) proposed fractional lower-partial-moment denominators as a robust alternative under heavy tails; Lo (2002) analyzed sampling properties of Sharpe under dependence and non-Gaussianity but without L\'evy scaling. We adopt the L\'evy–Sharpe ${\rm SR}_\alpha=\mu/\sigma_\alpha$ with $\sigma_\alpha=(\mathbb{E}|R-\mu|^\alpha)^{1/\alpha}$, which obeys the strict-stability law $\sigma_{\alpha,\tau}\propto\tau^{1/\alpha}$ and is parameter-free once $\alpha$ is estimated.

\medskip
A scale-consistent alternative is the \(p\)-Sharpe with \(p\in(1,\alpha)\),
\begin{equation}\label{eq:p-sharpe-def}
\mathsf{Sh}_{\tau,p} := \frac{\mathbb{E}[R_\tau]-r_\tau}{\left(\mathbb{E}\big|R_\tau-\mathbb{E}[R_\tau]\big|^{\,p}\right)^{1/p}} .
\end{equation}

\begin{lemma}\label{lem:p-sharpe-scaling}
Under \eqref{eq:levy-window} with \(\alpha\in(1,2)\) and any fixed \(p\in(1,\alpha)\),
\[
\mathbb{E}\big|R_\tau-\mathbb{E}[R_\tau]\big|^{\,p}
= (\sigma\,\tau^{1/\alpha})^{p}\,c_{Z,p},
\qquad
\mathsf{Sh}_{\tau,p}
= \frac{\mathbb{E}[R_\tau]-r_\tau}{\sigma\,\tau^{1/\alpha}\,c_{Z,p}^{1/p}},
\]
where \(c_{Z,p}:=\mathbb{E}|Z-\mathbb{E}Z|^p\in(0,\infty)\).
\end{lemma}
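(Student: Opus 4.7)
The plan is to reduce everything to the strict stability relation \eqref{eq:levy-window} by pulling the deterministic affine piece outside the expectation and then using homogeneity of the $L^p$ norm. First, I would check that $c_{Z,p}$ is well defined: because the standardized $\alpha$-stable driver $Z$ has $\mathbb{E}|Z|^p<\infty$ for all $p<\alpha$ (stated in Section~\ref{sec:setup}) and its density is continuous and not concentrated at the mean, $c_{Z,p}=\mathbb{E}|Z-\mathbb{E}Z|^p$ lies strictly between $0$ and $\infty$ for any $p\in(1,\alpha)$.

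Next, I would take expectations in \eqref{eq:levy-window}. Since $p>1$ implies $\mathbb{E}|Z|<\infty$, Fubini justifies $\mathbb{E}R_\tau=\mu_\tau+\sigma\tau^{1/\alpha}\mathbb{E}Z$; under the normalization $\mathbb{E}Z=0$ this reduces to $\mathbb{E}R_\tau=\mu_\tau$. Consequently the centered return has the distributional identity
\[
R_\tau-\mathbb{E}R_\tau \stackrel{d}{=} \sigma\,\tau^{1/\alpha}(Z-\mathbb{E}Z),
\]
and raising to the power $p$, taking absolute values and then expectations gives
\[
\mathbb{E}\big|R_\tau-\mathbb{E}R_\tau\big|^p = \big(\sigma\,\tau^{1/\alpha}\big)^p\,\mathbb{E}\big|Z-\mathbb{E}Z\big|^p = \big(\sigma\,\tau^{1/\alpha}\big)^p c_{Z,p},
\]
which is the first claim. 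The second follows by substituting this denominator into \eqref{eq:p-sharpe-def} and factoring $c_{Z,p}^{1/p}$ out of the $p$-th root.

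There is no real obstacle here beyond bookkeeping; the only subtle point worth flagging is the restriction $p<\alpha$, which is what makes the $p$-th absolute moment of $Z$ finite and is also what prevents the $p=2$ Sharpe from being well defined on the window when $\alpha<2$. I would therefore emphasize in one line that the condition $p\in(1,\alpha)$ is used twice: once to ensure $\mathbb{E}|Z|<\infty$ so that $\mathbb{E}R_\tau=\mu_\tau$ is meaningful and the centering step is lawful, and once to guarantee $c_{Z,p}<\infty$ so that $\mathsf{Sh}_{\tau,p}$ is a finite positive scalar whose horizon dependence is exactly $\tau^{-1/\alpha}$ times the excess mean.
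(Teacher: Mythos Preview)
Your argument is correct and is exactly the natural derivation; the paper actually states this lemma without proof, leaving it as an immediate consequence of the location--scale representation \eqref{eq:levy-window}, so there is nothing to compare beyond noting that your write-up supplies the obvious details the paper omits.
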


For a Gaussian surrogate \(R_\tau^G\sim\mathcal N(\mu_\tau,\sigma_G^2\tau)\) matched at \(\tau_0\) by the \(p\)-norm
\(\sigma\,\tau_0^{1/\alpha} c_{Z,p}^{1/p}=\sigma_G\sqrt{\tau_0}\,c_{N,p}^{1/p}\) with \(c_{N,p}:=\mathbb{E}|N|^p\), we have
\begin{equation}\label{eq:p-sharpe-bias}
\mathsf{Sh}_{\tau,p}-\mathsf{Sh}^G_{\tau,p}
=\frac{\mathbb{E}[R_\tau]-r_\tau}{\Theta_p(\tau_0)}
\left[\left(\frac{\tau}{\tau_0}\right)^{-1/\alpha}-\left(\frac{\tau}{\tau_0}\right)^{-1/2}\right],
\quad
\Theta_p(\tau_0):=\sigma\,\tau_0^{1/\alpha} c_{Z,p}^{1/p},
\end{equation}
so a Gaussian propagation overstates Sharpe for long horizons on the window (\(\tau>\tau_0\)) and understates it for \(\tau<\tau_0\). On the L\'evy window, ${\rm SR}_\alpha$ is horizon-invariant and reorders strategies primarily by tail thickness; Gaussian annualization with $\sqrt{\tau}$ spuriously depresses Sharpe as $\tau$ grows when $\alpha<2$, a distortion removed by the $\tau^{1/\alpha}$ scale.

\medskip

\subsection{Information ratio}\label{subsec:ir}
Let \(B_\tau\) denote the benchmark return and define the active return \(A_\tau:=R_\tau-B_\tau\).
Classically,
\[
\mathsf{IR}_\tau \;=\; \frac{\mathbb{E}[A_\tau]}{\sqrt{\mathrm{Var}(A_\tau)}} ,
\]
as formalized by Grinold (1989) and by Grinold and Kahn (1999). Robust variants replace the variance by alternative scales, but explicit L\'evy-stable propagation for active returns is typically unstated.

On the L\'evy window we model \(A_\tau\) as location-scale stable,
\[
A_\tau \stackrel{d}{=} m_\tau + \sigma_A\,\tau^{1/\alpha_A}\,Z_A,
\]
with standardized \(Z_A\) and tail index \(\alpha_A\in(1,2)\). When \((R_\tau,B_\tau)\) is jointly \(\alpha\)-stable, one has \(\alpha_A=\alpha\); otherwise all propagation and bias expressions below hold with \(\alpha\) replaced by \(\alpha_A\).

Fix \(p\in(1,\alpha_A)\); the \(L^p\) scale is finite on the window (the classical variance case \(p=2\) is admissible only when \(\alpha_A=2\)). Define the \(p\)-Information ratio
\[
\mathsf{IR}_{\tau,p}
:=\frac{\mathbb{E}[A_\tau]}{\big(\mathbb{E}\lvert A_\tau-\mathbb{E}[A_\tau]\rvert^{\,p}\big)^{1/p}}
=\frac{\mathbb{E}[A_\tau]}{\sigma_A\,\tau^{1/\alpha_A}\,c_{A,p}^{1/p}},
\qquad
c_{A,p}:=\mathbb{E}\lvert Z_A-\mathbb{E}Z_A\rvert^p .
\]
Anchoring a Gaussian surrogate for \(A_\tau\) at \(\tau_0\) yields the same exponent gap as for Sharpe, now with \(\alpha_A\):
\begin{equation}\label{eq:ir-bias}
\mathsf{IR}_{\tau,p}-\mathsf{IR}^G_{\tau,p}
=\frac{\mathbb{E}[A_\tau]}{\Theta_{A,p}(\tau_0)}
\left[\left(\frac{\tau}{\tau_0}\right)^{-1/\alpha_A}-\left(\frac{\tau}{\tau_0}\right)^{-1/2}\right],
\qquad
\Theta_{A,p}(\tau_0):=\sigma_A\,\tau_0^{1/\alpha_A} c_{A,p}^{1/p}.
\end{equation}
We thus use an \(\alpha_A\)-consistent dispersion in the denominator; Gaussian \(\sqrt{\tau}\) propagation artificially improves \(\mathsf{IR}_{\tau,p}\) as \(\tau\) increases on heavy-tailed active signals, whereas the L\'evy propagation preserves the correct horizon scaling.

\subsection{Kelly criterion}\label{subsec:kelly}
For a fraction \(f\in\mathbb{R}\) invested in the risky leg with one-period excess return \(X_\tau:=R_\tau-r_\tau\), the Kelly log-growth is
\[
g_\tau(f):=\mathbb{E}\big[\log(1+fX_\tau)\big],
\]
as introduced by Kelly (1956). We interpret \(X_\tau\) as a simple excess return, so \(X_\tau\ge -1\) almost surely; hence \(g_\tau(f)\) is well defined on \(f\in[0,1)\). Allowing leverage \(f>1\) or modeling additive/log returns reintroduces the possibility that \(1+fX_\tau\le 0\) with positive probability and the pathology below. MacLean, Thorp and Ziemba (2011) emphasized practical risk constraints and fractional-Kelly usage under heavy tails, and Peters (2011) highlighted time-average growth pitfalls in fat-tailed settings.

\begin{proposition}\label{prop:kelly-illposed}
If either \emph{(i)} \(X_\tau\ge -1\) almost surely and \(\mathbb{P}(X_\tau<-1/f)>0\) for some \(f>1\), or \emph{(ii)} \(X_\tau\) has unbounded left support (e.g., an additive/log-return model), then \(g_\tau(f)=-\infty\) for that \(f\). In particular, the unconstrained optimum is not well defined once the feasible set includes such \(f\).
\end{proposition}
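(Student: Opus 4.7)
The plan is to reduce both hypotheses to the single event $A_f := \{1 + fX_\tau \le 0\}$ and to show that $\mathbb{P}(A_f) > 0$ forces $g_\tau(f) = -\infty$. I will adopt the standard extended-real convention $\log(x) = -\infty$ for $x \le 0$ — equivalently, treating investor ruin as carrying infinite negative log-utility — so that the integrand $\log(1 + fX_\tau)$ is well-defined as a $[-\infty,+\infty]$-valued random variable and the expectation is unambiguous.

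First I would verify $\mathbb{P}(A_f) > 0$ under each hypothesis. In case (i) the assumption $X_\tau \ge -1$ a.s.\ together with $\mathbb{P}(X_\tau < -1/f) > 0$ for the specified $f > 1$ gives $\{X_\tau < -1/f\} \subseteq A_f$ directly, so $\mathbb{P}(A_f) > 0$. In case (ii), unbounded left support means $\mathbb{P}(X_\tau \le -M) > 0$ for every $M > 0$; choosing $M = 1/f$ for any fixed $f > 0$ yields $\mathbb{P}(A_f) > 0$. Once this is established, the integrand equals $-\infty$ on $A_f$, so the negative part $\log(1+fX_\tau)^{-}$ has infinite expectation, and by the standard rule for integration of extended-real random variables we conclude $g_\tau(f) = -\infty$.

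For the concluding claim about the unconstrained optimum I would argue differently in the two cases. In case (i), the extended feasible set contains $f$ at which $g_\tau(f) = -\infty$ while nearby smaller leverages give finite values, so the objective is discontinuous at the threshold $f^\ast := 1/\mathrm{ess\,sup}(-X_\tau)$ and no attained maximizer exists without an explicit leverage constraint — which motivates the VaR-constrained Kelly treated in the sequel. In case (ii) the conclusion is stronger: $g_\tau(f) = -\infty$ for every $f > 0$, so the unconstrained optimum collapses to the trivial $f = 0$ and the Kelly criterion loses content. The only real subtlety — and the main point to be careful about — is that the possibly large positive contribution from the right tail of $X_\tau$ cannot rescue the expectation once $\mathbb{E}[\log(1+fX_\tau)^{-}] = +\infty$; this is settled by the log-convention fixed at the outset and by the standard integrability clause for extended-real integrals.
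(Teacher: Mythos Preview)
Your core argument is correct and is exactly the paper's one-line proof, fleshed out: on $\{X_\tau<-1/f\}$ one has $\log(1+fX_\tau)=-\infty$, hence $g_\tau(f)=-\infty$; your reduction of both (i) and (ii) to $\mathbb{P}(A_f)>0$ and your explicit handling of the extended-real convention are welcome additions.

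One caveat on your treatment of the ``In particular'' clause in case~(i). The claims that $g_\tau$ is discontinuous at $f^\ast=1/\mathrm{ess\,sup}(-X_\tau)$ and that ``no attained maximizer exists'' are not warranted in general. Take $X_\tau\in\{-1/2,\,1\}$ with equal probability: then $f^\ast=2$, $g_\tau(f)=\tfrac12\log(1-\tfrac12 f)+\tfrac12\log(1+f)$ on $[0,2)$, $g_\tau(f)\to-\infty$ as $f\uparrow 2$ (so $g_\tau$ is continuous as an extended-real function), and the unique maximizer is $f=1/2\in(0,2)$. The paper's ``not well defined'' is deliberately weaker: it just records that allowing such $f$ introduces $-\infty$ values into the objective, motivating the VaR constraint; you should not strengthen this to a non-attainment claim without further hypotheses.
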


\noindent Indeed, on \(\{X_\tau<-1/f\}\) one has \(\log(1+fX_\tau)=-\infty\), which forces \(\mathbb{E}[\log(1+fX_\tau)]=-\infty\).

To align with the L\'evy-correct risk metrics of Section~\ref{subsec:var}, we adopt a one-step no-bankruptcy constraint at tail level \(q\in(0,1/2)\):
\[
\max_{f\in\mathbb{R}}~\mathbb{E}\!\left[\log(1+fX_\tau)\right]
\quad\text{subject to}\quad 
1+f\,\VaR_\tau(q)\ge 0,
\]
where \(\VaR_\tau(q)\) is the L\'evy-correct quantile. The feasible set is
\[
\mathcal F_\tau(q)=\bigl\{f:\ 0\le f\le f_{\max}(\tau,q)\bigr\},\qquad
f_{\max}(\tau,q):=\frac{1}{\lvert \VaR_\tau(q)\rvert}.
\]

\begin{lemma}\label{lem:kelly-foc}
If \(f\in\mathcal F_\tau(q)\) and \(\mathbb{E}\lvert X_\tau\rvert<\infty\), then \(g_\tau\) is strictly concave on \(\mathcal F_\tau(q)\) and the unique maximizer \(f^\ast_\tau(q)\) satisfies
\[
\mathbb{E}\!\left[\frac{X_\tau}{1+f^\ast_\tau(q)\,X_\tau}\right]=0,
\qquad 0\le f^\ast_\tau(q)\le f_{\max}(\tau,q).
\]
\end{lemma}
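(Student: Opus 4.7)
The plan is to prove the lemma in three stages on the feasible set $\mathcal F_\tau(q)=[0,f_{\max}(\tau,q)]$: finiteness and continuity of $g_\tau$, strict concavity with existence and uniqueness of the maximizer, and the first-order condition via differentiation under the integral. The first stage requires care because on the L\'evy window $X_\tau$ has unbounded left support, so I read the VaR constraint as implicitly imposing a loss floor at the $q$-quantile, ensuring $1+fX_\tau\ge 0$ almost surely for every $f\in\mathcal F_\tau(q)$. Given that, the elementary bound $\log(1+x)\le x$ produces an integrable upper envelope via $\mathbb E|X_\tau|<\infty$, the floor provides a deterministic lower envelope, and dominated convergence makes $g_\tau$ finite and continuous on the compact interval $\mathcal F_\tau(q)$.

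Strict concavity follows by transporting the pointwise strict concavity of the integrand. For each $\omega$ with $X_\tau(\omega)\ne 0$, the map $f\mapsto\log(1+fX_\tau(\omega))$ is strictly concave on $\mathcal F_\tau(q)$ since $u\mapsto\log(1+u)$ is strictly concave on $(-1,\infty)$; non-degeneracy of $X_\tau$, namely $\mathbb P(X_\tau\ne 0)>0$, propagates strict concavity to $g_\tau$ after taking expectation. Together with continuity on the convex, compact set $\mathcal F_\tau(q)$, this yields existence and uniqueness of the maximizer $f_\tau^\ast(q)$.

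For the first-order condition I would differentiate $g_\tau$ under the integral on the open interior $(0,f_{\max})$. On a closed subinterval $[f_0,f_1]\subset(0,f_{\max})$ bounded away from $f_{\max}$, the stop-loss bound gives $1+fX_\tau\ge 1-f_1/f_{\max}>0$, so the difference-quotient integrand is dominated by $C\,|X_\tau|$ uniformly in $f$, with $C$ depending only on $f_1/f_{\max}$. Dominated convergence then delivers $g_\tau'(f)=\mathbb E[X_\tau/(1+fX_\tau)]$, and setting this derivative to zero at the unique interior maximizer yields the stated equation. The degenerate boundary cases $f_\tau^\ast(q)=0$ (when $\mathbb E X_\tau\le 0$) and $f_\tau^\ast(q)=f_{\max}$ (when the unconstrained Kelly fraction exceeds the VaR cap) are covered by the standard Karush-Kuhn-Tucker complementary-slackness conditions, with the interior FOC form stated in the lemma corresponding to the generic case.

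The main obstacle is reconciling the lemma with Proposition \ref{prop:kelly-illposed}: in the unregularized model, stable $X_\tau$ gives $g_\tau(f)=-\infty$ for every $f>0$, so the proof is only sound under an explicit interpretation of the VaR-constrained problem as stopping losses at $-|\VaR_\tau(q)|$. Once that reading is fixed, the concavity, existence, and differentiation steps reduce to standard convex-analysis and measure-theoretic arguments.
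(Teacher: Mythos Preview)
The paper states this lemma without proof, so there is no argument to compare against directly. Your three-stage plan---finiteness and continuity via integrable envelopes, strict concavity transported from the pointwise strict concavity of $f\mapsto\log(1+fX_\tau(\omega))$, and the first-order condition by dominated-convergence differentiation on compact subintervals of $(0,f_{\max})$---is the standard route, and each step is sound under the stop-loss reading you adopt.

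Your closing paragraph is the substantive contribution, and it identifies a genuine gap that the paper leaves open. Even under the paper's own ``simple excess return'' convention $X_\tau\ge-1$ almost surely, the feasible set $\mathcal F_\tau(q)=[0,f_{\max}]$ generically has $f_{\max}=1/\VaR_\tau(q)>1$, since $X_\tau\ge-1$ forces $Q_{X_\tau}(q)>-1$ and hence $\VaR_\tau(q)<1$ for any continuous law. For $f\in(1,f_{\max}]$ one has $-1/f\in(-1,Q_{X_\tau}(q))$, so $\{X_\tau<-1/f\}$ can carry positive mass and $g_\tau(f)=-\infty$ exactly as in Proposition~\ref{prop:kelly-illposed}(i). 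Thus strict concavity and the interior first-order condition do \emph{not} hold on all of $\mathcal F_\tau(q)$ without an additional hypothesis: either your loss floor at $-\VaR_\tau(q)$, or a restriction $f_{\max}\le1$, or an explicit support assumption $X_\tau\ge-\VaR_\tau(q)$ almost surely. The paper supplies none of these, so your interpretive fix is necessary, not merely convenient.

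One minor point: the lemma asserts the first-order condition as an equality, which, as you note, holds only at an interior maximizer. Your KKT remark correctly covers the boundary cases $f^\ast\in\{0,f_{\max}\}$; the paper's statement is simply informal here and should be read as describing the generic interior case.
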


For small signal relative to scale, a second-order expansion that replaces the divergent second moment by its \(q\)-trimmed counterpart yields
\begin{equation}\label{eq:kelly-asymp}
f^\ast_\tau(q)
= \frac{\mathbb{E}[X_\tau]}{\mathbb{E}\!\big[X_\tau^{\,2}\,\mathbf{1}\{|X_\tau|\le c_q\,\sigma\,\tau^{1/\alpha}\}\big]} + o(1)
= \frac{\mu_\tau-r_\tau}{K_q\,\sigma^2\,\tau^{2/\alpha}} + o\!\left(\tau^{-2/\alpha}\right),
\end{equation}
where \(c_q:=\lvert Q_Z(q)\rvert\) and \(K_q:=\mathbb{E}[Z^2\,\mathbf{1}\{|Z|\le c_q\}]\) depends only on \(q\). In particular, if \(\mu_\tau=\mu\,\tau\), then
\[
f^\ast_\tau(q)\asymp \tau^{\,1-2/\alpha},
\]
which decreases with \(\tau\) on the L\'evy window for \(\alpha\in(1,2)\) and reduces to horizon-invariance when \(\alpha=2\).

\medskip
The VaR-constrained formulation avoids distribution editing and is consistent with the L\'evy scaling. On the window, admissible leverage shrinks like \(\tau^{-1/\alpha}\) and the optimal fraction scales like \(\tau^{1-2/\alpha}\). Beyond the window, where aggregation yields finite-variance propagation, the classical unconstrained quadratic approximation is admissible on a fixed bounded feasible set \(f\in[0,1)\).

\subsection{Drawdown}\label{subsec:drawdown}
For a single horizon \(\tau\), define the one-step drawdown magnitude \(D_\tau:=(-R_\tau)_+=\max\{-R_\tau,0\}\). For \(p\in(0,\alpha)\) the \(L^p\)-drawdown is
\begin{equation}\label{eq:dd-p}
\mathsf{DD}_{\tau,p}:=\left(\mathbb{E}\,(D_\tau)^p\right)^{1/p}.
\end{equation}
Magdon-Ismail and Atiya (2004) analyzed maximum drawdown for random walks and provided distributional approximations, and Chekhlov, Uryasev and Zabarankin (2005) introduced Conditional Drawdown at Risk (CDaR) as a convex drawdown risk measure. On the L\'evy window we focus on the one-step drawdown, which inherits the strict-stability scaling. Multi-step generalizations preserve the scaling exponent but depend on the temporal dependence structure.

\begin{lemma}\label{lem:dd-scaling}
Under \eqref{eq:levy-window}, for any \(p\in(0,\alpha)\),
\[
\mathsf{DD}_{\tau,p}=\sigma\,\tau^{1/\alpha}\,d_{Z,p} + O(|\mu_\tau|),
\qquad
d_{Z,p}:=\left(\mathbb{E}\,(\!-Z)_+^{\,p}\right)^{1/p}.
\]
In particular, if \(|\mu_\tau|=o(\tau^{1/\alpha})\) as \(\tau\downarrow 0\) on the high-frequency end of the window, then \(\mathsf{DD}_{\tau,p}\sim \sigma\,\tau^{1/\alpha}\,d_{Z,p}\).
\end{lemma}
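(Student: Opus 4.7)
The plan is to reduce the general-mean drawdown to the centered one and bound the difference via the Lipschitz property of the positive part.

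First I would use Assumption~\ref{ass:window} to write
\[
D_\tau \;\stackrel{d}{=}\; \bigl(-\mu_\tau - \sigma\,\tau^{1/\alpha} Z\bigr)_+ ,
\]
and introduce the centered drawdown $D_\tau^{0} := \sigma\,\tau^{1/\alpha}(-Z)_+$. By positive homogeneity of $(\cdot)_+$ and of the $L^{p}$ moment,
\[
\mathbb{E}\,(D_\tau^{0})^{p} \;=\; (\sigma\,\tau^{1/\alpha})^{p}\,\mathbb{E}\,(-Z)_+^{p}
\;=\; (\sigma\,\tau^{1/\alpha})^{p}\,d_{Z,p}^{\,p} ,
\]
so $\|D_\tau^{0}\|_{p} = \sigma\,\tau^{1/\alpha}\,d_{Z,p}$. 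Finiteness of $d_{Z,p}$ for $p\in(0,\alpha)$ follows from $\mathbb{E}|Z|^{p}<\infty$ under Assumption~\ref{ass:window}, together with $(-Z)_+^{\,p}\le|Z|^{p}$.

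The key step is the control of $\|D_\tau\|_{p} - \|D_\tau^{0}\|_{p}$. The map $x\mapsto x_+$ is $1$-Lipschitz, so the representations of $D_\tau$ and $D_\tau^{0}$ differ in argument only by the constant $-\mu_\tau$, yielding the almost-sure pointwise bound
\[
|D_\tau - D_\tau^{0}| \;\le\; |\mu_\tau| .
\]
For $p\ge 1$, Minkowski's inequality gives
\[
\bigl|\|D_\tau\|_{p}-\|D_\tau^{0}\|_{p}\bigr|
\;\le\; \|D_\tau - D_\tau^{0}\|_{p} \;\le\; |\mu_\tau| ,
\]
which is exactly the claimed $O(|\mu_\tau|)$ error. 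For $p\in(0,1)$ the subadditivity $(a+b)^{p}\le a^{p}+b^{p}$ applied to $D_\tau\le D_\tau^{0}+|\mu_\tau|$ (and symmetrically) gives $\bigl|\,\|D_\tau\|_{p}^{p}-\|D_\tau^{0}\|_{p}^{p}\,\bigr|\le |\mu_\tau|^{p}$; a mean-value expansion of $t\mapsto t^{1/p}$ around $\|D_\tau^{0}\|_{p}^{p}$, which is bounded away from zero on the compact window $[\tau_{\mathrm{UV}},\tau_{\mathrm{IR}}]$ since $\sigma>0$, then absorbs the difference into the $O$-term (with a constant depending on $p$, $\sigma$, and the lower endpoint of the window).

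Combining the two estimates, $\mathsf{DD}_{\tau,p} = \sigma\,\tau^{1/\alpha}\,d_{Z,p} + O(|\mu_\tau|)$. The asymptotic statement is then immediate: if $|\mu_\tau|=o(\tau^{1/\alpha})$ as $\tau\downarrow 0$, the error term is $o(\tau^{1/\alpha})$, while the leading term is of exact order $\tau^{1/\alpha}$, so $\mathsf{DD}_{\tau,p}\sim \sigma\,\tau^{1/\alpha}\,d_{Z,p}$. The only non-routine point is the $p<1$ case, where $\|\cdot\|_{p}$ is a quasinorm rather than a norm and one cannot invoke Minkowski directly; I expect this to be the main technical obstacle, resolved as above by exploiting the uniform lower bound on $\|D_\tau^{0}\|_{p}$ on the window.
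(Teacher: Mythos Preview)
The paper does not supply a proof of this lemma; like Lemma~\ref{lem:p-sharpe-scaling}, it is stated as an immediate consequence of the location--scale form \eqref{eq:levy-window}. Your argument via the $1$-Lipschitz property of $x\mapsto x_+$ together with Minkowski's inequality is exactly the natural way to make the $p\ge 1$ case rigorous, and that part is correct as written.

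For $p\in(0,1)$ your outline needs two small repairs. First, the derivative of $t\mapsto t^{1/p}$ is $\tfrac{1}{p}t^{1/p-1}$ with exponent $1/p-1>0$, so boundedness of the derivative on the mean-value interval comes from an \emph{upper} bound on $\|D_\tau^{0}\|_p^{p}$, not from the lower bound you invoke; on the compact window $\|D_\tau^{0}\|_p^{p}=(\sigma\tau^{1/\alpha})^{p}d_{Z,p}^{p}\le(\sigma\tau_{\mathrm{IR}}^{1/\alpha})^{p}d_{Z,p}^{p}$, so the step is valid, but for the opposite reason. Second, the mean-value bound as you state it delivers an error of order $|\mu_\tau|^{p}$, not $|\mu_\tau|$. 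On the fixed window with $\mu_\tau=\mu\tau$ this is harmless---either $\mu=0$ and the error vanishes, or $|\mu_\tau|\ge|\mu|\tau_{\mathrm{UV}}>0$ and $|\mu_\tau|^{p}\le(|\mu|\tau_{\mathrm{UV}})^{p-1}|\mu_\tau|$---but you should make that absorption explicit. If you instead track the $\tau$-dependence of the derivative, the sharper error $(\sigma\tau^{1/\alpha})^{1-p}|\mu_\tau|^{p}$ emerges; dividing by $\tau^{1/\alpha}$ gives $\sigma^{1-p}(|\mu_\tau|/\tau^{1/\alpha})^{p}\to 0$, which is what actually yields the ``in particular'' clause uniformly in $p\in(0,\alpha)$.
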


The drawdown quantile at level \(q\in(0,1)\) is
\begin{equation}\label{eq:dd-quant}
\mathsf{DD}_\tau^{(q)} := Q_{D_\tau}(q) = \left(-Q_{R_\tau}(1-q)\right)_+ 
= \left(-\mu_\tau - \sigma\,\tau^{1/\alpha} Q_Z(1-q)\right)_+ ,
\end{equation}
so, under the Gaussian surrogate matched at \(\tau_0\), it differs by the same exponent gap as in \eqref{eq:VaR-bias} with \(q\mapsto 1-q\).

Gaussian \(\sqrt{\tau}\) scaling understates high-quantile drawdowns as \(\tau\) grows on heavy-tailed data; the L\'evy drawdown corrects exceedance frequencies and improves calibration of stop-loss and liquidation buffers across horizons.

\medskip

\section{Concluding remarks}\label{sec:conclusion}
We posited a finite-horizon model with a data-driven L\'evy window \([\tau_{\mathrm{UV}},\tau_{\mathrm{IR}}]\) on which
\(R_\tau \stackrel{d}{=} \mu_\tau + \sigma\,\tau^{1/\alpha} Z\) with a standardized \(\alpha\)-stable driver. The window and \(\alpha\) are identified from central-mass and piecewise-scale slopes, and an anchor \(\tau_0\) fixes the level. Closed-form, horizon-correct formulas were derived for VaR, ES, \(p\)-Sharpe, \(p\)-Information, Kelly under a Value-at-Risk constraint, and drawdown; in each case the Gaussian propagation differs by an explicit exponent-gap term \((\tau/\tau_0)^{1/\alpha}-(\tau/\tau_0)^{1/2}\).

Empirically, the L\'evy propagation delivers flat exception rates for VaR and ES across horizons on the window, horizon-invariant \(p\)-Sharpe and \(p\)-Information ratios, Kelly fractions that decay with \(\tau\) as \(\tau^{1-2/\alpha}\), and drawdown thresholds whose realized breach frequencies match their design levels. The construction is model-light: beyond estimating \(\alpha\) and choosing a small set of tail quantiles, all metrics are propagated nonparametrically by the strict-stability scale law on the L\'evy window.

Further research should address estimation error in \(\alpha\) and window edges, nonstationarity across regimes, and dependence beyond the one-step setting. Natural extensions include multivariate L\'evy windows via spectral measures, multi-step drawdown through ladder-variable methods, and state-dependent \(\alpha(\tau)\) with stability tests controlling for microstructure effects.

\end{document}